\documentclass[twocolumn,preprintnumbers,amsmath,amssymb]{revtex4-2}
\usepackage[english]{babel}
\usepackage{enumitem}
\usepackage{epsfig}
\usepackage{graphicx}
\usepackage{dcolumn}
\usepackage{bm}
\usepackage{amsmath,amsfonts}
\usepackage{amssymb}
\usepackage{amsthm}
\usepackage{physics}
\usepackage{comment}
\usepackage{color} 
\usepackage{xcolor}
\usepackage[final,colorlinks,linkcolor=blue,anchorcolor=blue,urlcolor=blue,citecolor=blue]{hyperref}
\usepackage{caption}
\usepackage{subcaption}
\usepackage{booktabs}
\usepackage{siunitx}
\usepackage{epstopdf}
\usepackage{diagbox}
\usepackage{framed}
\usepackage{dsfont}
\usepackage{float}
\newtheorem{proposition}{Proposition}

\begin{document}
	\title{Device-independent quantum cryptography with input leakage}
    \author{Víctor Zapatero$^{1,2,3}$}
	\author{Marcos Curty$^{1,2,3}$}
 
\affiliation{$^{1}$Vigo Quantum Communication Center, University of Vigo, Vigo E-36310, Spain}

\affiliation{$^2$Escuela de Ingeniería de Telecomunicación, Department of Signal Theory and Communications, University of Vigo, Vigo E-36310, Spain}

\affiliation{$^3$AtlanTTic Research Center, University of Vigo, Vigo E-36310, Spain}

\begin{abstract}
Device-independence is the gold standard of quantum cryptography. To meet this standard, a central assumption is that no information leakage occurs during protocol execution. We relax this assumption by analyzing CHSH-based randomness certification and key distribution with partial leakage of the inputs, modeled in terms of a noisy channel. Our results quantify the certifiable local randomness and the secret key rate as a function of the magnitude of the input leakage.
\end{abstract}

\maketitle
\section{Introduction}
The possibility of quantifying the randomness of a physical process with respect to a knowledgeable eavesdropper is an appealing cryptographic feature of quantum theory~\cite{Law}. Bell nonlocality~\cite{Bell,Brunner_review,Scarani_book}, in particular, has been proven to push the necessary assumptions of randomness certification and amplification to their ultimate logical minimum~\cite{Pironio2010,ColbeckRenner2012,EkertRenner2014,Ramanathan2016,Brandao2016,KesslerFriedman}. Moreover, this phenomenon underlies the celebrated \emph{device-independent} (DI) approach to quantum key distribution (QKD)~\cite{E91,Mayers,Barrett2005,Acín2007,PironioDIQKD,reviewDI}, successfully demonstrated in a series of pioneering experiments~\cite{Lu,Nadlinger,Zhang,Liu}.

Fundamentally for DI cryptography, the devices must operate at secure locations that do not leak sensitive information, and the possibility of relaxing this assumption comes at the price of characterizing the leakage to a certain extent. In this work, we tackle the problem of DI cryptography with partial leakage of the protocol inputs. In practice, this leakage could be mediated by a hardware side-channel, such as electromagnetic radiation or power consumption if the inputs are produced with physical random number generators~\cite{Agrawal,Lavaud,Kocher}.

To quantify the potential effect of input leakage in DI cryptography, we adopt a standard adversarial viewpoint. First, we study the task of certifying randomness in a CHSH test~\cite{CHSH} with input leakage, generally described within the framework of~\emph{measurement-dependence}~\cite{BarrettGisin2011,Hall2011,Koh2012,Putz2014,Putz2016,Friedman2019,Hall2020,Supic2020}: in every round of the test, independent leakage variables correlated to the parties' inputs leak to the environment, and in particular to the adversary. Technically, this provides an example of \emph{causal}~\cite{Hall2020} or \emph{independent-sources}'~\cite{Putz2016} measurement-dependence~\cite{lightcones}. Interestingly, a different perspective on this problem has been adopted in a recent work~\cite{Ramanathan}, where the influence of each party's input on the other party's output is not constrained to be mediated by a measurement-dependent hidden-variable. Instead, the authors further relax the assumption of \emph{parameter-independence}~\cite{Scarani_book,Hall2020}, following a proposal of~\cite{Vieira2025} and resorting to a technical tool first presented in~\cite{crosstalk}. For a brief summary of Bell's assumptions, the reader is referred to Appendix~\ref{HVM}.

As a second contribution, we build on our results on randomness certification to analyze the security of DIQKD with input leakage, a possibility suggested in a number of works (see e.g.~\cite{Kofler,PironioDIQKD,Koh2012}) but never actually addressed in the literature. It is worth mentioning, however, that a security proof incorporating constrained leakage of the outputs is put forward in~\cite{Tan}.

The paper is organized as follows. In Sec.~\ref{preliminaries}, we formalize the canonical problem of DI randomness certification. In Sec.~\ref{DIRG}, we generalize this problem to incorporate input leakage, and in Sec.~\ref{DIQKD}, we extend the analysis to DIQKD. Finally, we present conclusions and directions for further work in Sec.~\ref{Outlook}, and the appendices contain additional results and supplemental material.

\section{Preliminaries}\label{preliminaries}
We first recall the problem of certifying randomness in a CHSH test~\cite{CHSH}, introducing the necessary background on Bell nonlocality along the way. Following~\cite{Scarani_book}, we refer to this task as DI randomness generation (DIRG). 

Throughout this work, upper-case letters denote random variables and lower-case letters denote their realizations. A CHSH test is an experiment with two non-communicating parties, Alice and Bob, each holding a box with two inputs and two outputs. In every round of the experiment, the parties locally select respective inputs $X$ and $Y$, and the boxes produce local outputs $A$ and $B$ in response. For simplicity, we assume that the boxes behave identically in every round, thus being described by a fixed list of probabilities $\{\mathcal{P}(a,b|x,y)\}_{a,b,x,y}$ experimentally accessible in the long run. We refer to this list as the \emph{observed behavior}, and we denote it by $\mathcal{P}$.

Generally speaking, the purpose of DIRG is to quantify, on the sole basis of $\mathcal{P}$, the unpredictability of the outputs for an adversary (Eve) holding a finer description of $\mathcal{P}$. In full generality for the considered i.i.d. scenario, this finer description is a \emph{hidden variable} $\Lambda$ shared between the boxes and Eve. Crucially, Eve does not need to interact with the boxes to agree on $\Lambda$: in every round of the test, all three can consistently draw $\Lambda$ from any distribution $q_{\lambda}$ using \emph{pre-shared randomness}. As long as $\Lambda$ is independent of $X$ and $Y$ ---in the jargon, \emph{measurement-independent}---, the observed behavior fulfills~\cite{clarification}
\begin{equation}\label{convex}
\mathcal{P}=\sum_{\lambda}q_{\lambda}\mathcal{P}_{\lambda},
\end{equation}
for a series of underlying behaviors $\mathcal{P}_{\lambda}$ associated to the different $\lambda$'s. To derive non-trivial bounds on Eve's predictive power, additional assumptions are required on top of measurement-independence. For instance, it is occasionally assumed that the $\mathcal{P}_{\lambda}$ are \emph{quantum}, meaning that
\begin{equation}\label{realization}
\mathcal{P}_{\lambda}(a,b|x,y)=\Tr\left[\left(A_{a|x}^{\lambda}\otimes{}B_{b|y}^{\lambda}\right)\rho_{\rm AB}^{\lambda}\right]
\end{equation}
for local measurements $\{A_{a|x}^{\lambda}\}$ and $\{B_{b|y}^{\lambda}\}$, and quantum states $\rho_{\rm AB}^{\lambda}$ of Eve's choice. We refer to the set of quantum behaviors ---i.e. those compliant with Eq.~(\ref{realization})--- as the \emph{quantum set} $\mathcal{Q}$. One can easily show that $\mathcal{Q}$ is convex, meaning that $\mathcal{P}$ is quantum if all the $\mathcal{P}_{\lambda}$ are quantum. When making this extra assumption of \emph{quantumness}, an additional layer of generality consists of assuming that $\rho_{\rm AB}^{\lambda}=\Tr_{\rm E}\left[\rho_{\rm ABE}^{\lambda}\right]$, where the quantum system E is under Eve's control. This is the adversary model that we adopt in this work. For obvious reasons, we refer to $\Lambda$ as Eve's \emph{classical side-information} and to E as Eve's \emph{quantum side-information}.

Crucially, the quantumness of $\mathcal{P}$ does not imply the unpredictability of the outcomes by any means: in general, this assumption is compatible with the possibility that $\Lambda$ pre-allocates the boxes' outputs based on their local inputs through deterministic \emph{response functions} $\mathcal{P}_{\lambda}(a|x)=\delta_{a,a_{x}^{\lambda}}$ and $\mathcal{P}_{\lambda}(b|y)=\delta_{b,b_{y}^{\lambda}}$~\cite{Fine}, such that
\begin{equation}\label{local}
\mathcal{P}(a,b|x,y)=\sum_{\lambda}q_{\lambda}\delta_{a,a_{x}^{\lambda}}\delta_{b,b_{y}^{\lambda}}.
\end{equation}
Any $\mathcal{P}$ satisfying Eq.~(\ref{local}) is called \emph{local}, and the set of all local behaviors is known as the \emph{local set} $\mathcal{L}$~\cite{caveat}. Obviously, no randomness can be certified if $\mathcal{P}\in\mathcal{L}$. Fortunately however, $\mathcal{Q}$ is a strict superset of $\mathcal{L}$, and the only deterministic quantum behaviors are those in $\mathcal{L}$. 

The randomness in the outcomes of $\mathcal{P}$ can be quantified in different (generally inequivalent) ways, but the central and most studied question goes as follows~\cite{Pironio2010,Pironio2013,Pironio2025}. Given an observed behavior $\mathcal{P}$, what is Eve's \emph{guessing probability} of $A$ for a fixed local input, say $X=x^{*}$? Eve can finely tune $\Lambda$ and the different $\mathcal{P}_{\lambda}$ to maximize her predictive power on this particular outcome. For the same purpose, she can perform a carefully selected measurement $\{E_{c}^{\lambda}\}$ on her quantum system E, possibly dependent on the side-information $\Lambda$. Note, however, that whether this measurement takes place before or after Alice's and Bob's is irrelevant due to commutativity, meaning that one can view Eve's measurement upon obtaining the outcome $c$ as preparing the subnormalized state $\tilde{\rho}^{\hspace{.05cm}\lambda{}c}_{\rm AB}=\Tr_{\rm E}\bigl[\bigl(\mathds{1}_{\rm AB}\otimes{}E_{c}^{\lambda}\bigr)\rho_{\rm ABE}^{\lambda}\bigr]$. This proves that, for the standard figure of merit under consideration, quantum-side information does not increase Eve's capabilities, and one can restrict the analysis to the classical side-information $\Lambda$ without loss of generality. Under these circumstances, Eve simply guesses for the most likely outcome based on $\Lambda$. That is to say, denoting by $\mathcal{G}_{a}$ the set of $\mathcal{P}_{\lambda}$'s whose largest marginal for $X=x^{*}$ is $\mathcal{P}_{\lambda}(a|x^{*})=\Tr\bigl[ (A_{a|x^{*}}^{\lambda}\otimes{}\mathds{1}_{\rm B})\rho^{\lambda}_{\rm AB}\bigr]$, the requested upper bound on the guessing probability is formally given by
\begin{equation}\begin{split}\label{optimization_full}
&\max_{\displaystyle{\{q_{\lambda},\mathcal{P}_{\lambda}\}}}\quad \sum_{a}\sum_{\mathcal{P}_{\lambda}\in\mathcal{G}_{a}}q_{\lambda}\mathcal{P}_{\lambda}(a|x^{*})\hspace{.4cm}\\
&\hspace{.2cm}\textrm{s.t.}\hspace{.2cm}\sum_{\lambda}q_{\lambda}\mathcal{P}_{\lambda}=\mathcal{P},\hspace{.1cm}\mathcal{P}_{\lambda}\in{}\mathcal{Q}\hspace{.2cm}\textrm{for all }\lambda.
\end{split}\end{equation}
Eq.~(\ref{optimization_full}) uses all 16 entries of $\mathcal{P}$ as constraints of the optimization~\cite{Bancal,Nieto}, but a standard approach is to only test the value of the CHSH functional~\cite{CHSH}
\begin{equation}\label{def}
S[\mathcal{P}]=\sum_{a,b,x,y}(-1)^{a+b+xy}\mathcal{P}(a,b|x,y),
\end{equation}
where $\mathcal{P}(a,b|x,y)=\sum_{\lambda}q_{\lambda}\mathcal{P}_{\lambda}(a,b|x,y)$ and we assume that $a,b,x,y\in\{0,1\}$. The resulting problem reads exactly as Eq.~(\ref{optimization_full}), but replacing the full behavior request, $\sum_{\lambda}q_{\lambda}\mathcal{P}_{\lambda}=\mathcal{P}$, by the observed CHSH value, $S[\mathcal{P}]=S_{\rm obs}$. Indeed, this constraint suffices because $\mathcal{L}$ is a \emph{polytope} and $S[\mathcal{P}]=2$ is one of its non-trivial \emph{facets}, such that any $S_{\rm obs}>2$ rules out membership of $\mathcal{P}$ in $\mathcal{L}$.

An apparent difficulty of this optimization task is the unrestricted character of the side-information $\Lambda$. Nevertheless, the convexity of $\mathcal{Q}$ and a symmetry of $S$ drastically simplify the problem, which can ultimately be cast as an optimization over a single behavior of pure two-qubit states and projective measurements~\cite{Scarani_book}. Analytically solving this optimization yields a maximum guessing probability of~\cite{Pironio2010}
\begin{equation}\label{pguess}
G_{x^{*}}(S_{\rm obs})=\frac{1}{2}\left[1+\sqrt{2-\left(\frac{S_{\rm obs}}{2}\right)^2}\right].
\end{equation}
In particular, complete unpredictability is granted for the Tsirelson's bound $S_{\rm obs}=2\sqrt{2}$~\cite{Tsirelson}, which \emph{self-tests} the behavior of a maximally entangled two-qubit state under suitable Pauli measurements up to local isometries~\cite{Supic_review}.

\section{DIRG with input leakage}\label{DIRG}
Let us now assume that, during the Bell experiment, partial information about the inputs is leaked to Eve. To describe the leakage, we consider a pair of independent hidden variables $U$ and $V$ generated in every round, where $U$ stands for Alice's leak, partially correlated to $X$, and $V$ stands for Bob's leak, partially correlated to $Y$. This is an instance of \emph{causal}~\cite{Hall2020} or \emph{independent-sources}~\cite{Putz2016} measurement-dependence. The situation is illustrated in Fig.~\ref{fig:depiction}.
\begin{figure}[H]
    \centering
    \includegraphics[scale=0.2]{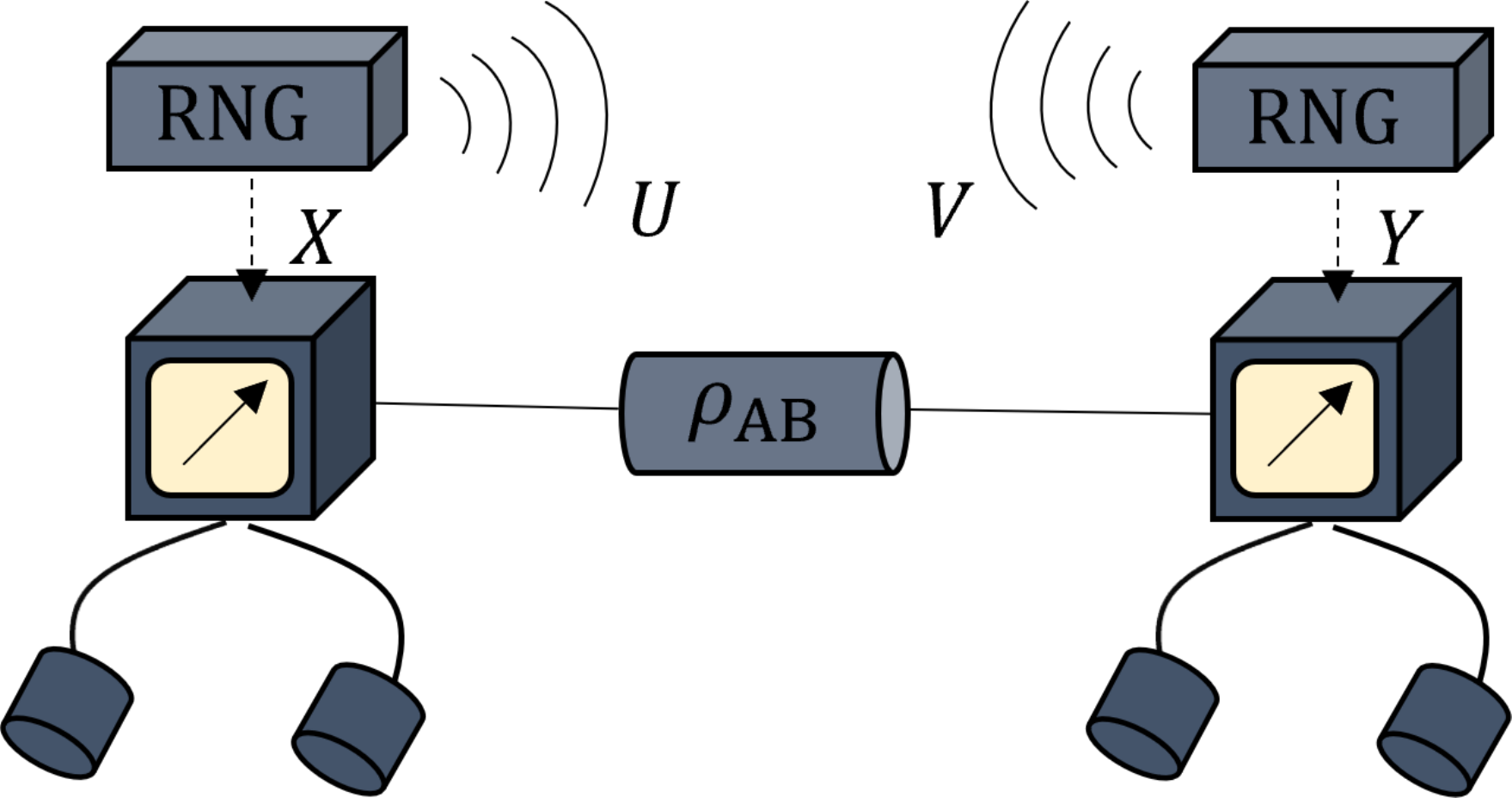}
    \caption{A CHSH test with input leakage. In the process of generating the inputs $X$ and $Y$, partial information about them is leaked, modeled with two independent leakage variables $U$ and $V$. Most prominently, if physical random number generators (RNGs) are used to select the inputs, $U$ and $V$ could correspond to hardware side channels, such as electromagnetic radiation or power consumption. If the test is to be used for cryptographic purposes, one can conservatively assume that the leakage is accessible to the adversary, the measurement devices, and the entanglement source ($\rho_{\rm AB}$).}
    \label{fig:depiction}
\end{figure}

To account for the impact of such leakage on DIRG, we shall assume that the pair $(U,V)$ is accessible to the boxes and the entanglement source, which collude to finely tune the quantum realization in every round. Paranoid as it may seem, this approach allows to quantify how much the leakage can affect DIRG \emph{in principle}, which provides a conservative estimate of how much it can affect it \emph{in practice}.

In full generality, upon receiving $(U,V)=(u,v)$, the source and the boxes locally draw suitable side-information $\lambda$ from a distribution $q_{\lambda|uv}$ using pre-shared randomness, and the tuple $(u,v,\lambda)$ determines the state to be delivered, $\rho_{\rm AB}^{uv\lambda}$, and the measurements contemplated on each side, $\{A_{a|x}^{uv\lambda}\}$ for Alice and $\{B_{b|y}^{uv\lambda}\}$ for Bob. In short, $(U,V)=(u,v)$ determines the behavior
\begin{equation}
\mathcal{P}_{uv}=\sum_{\lambda}q_{\lambda|uv}\mathcal{P}_{uv\lambda},
\end{equation}
where $\mathcal{P}_{uv\lambda}(a,b|x,y)=\Tr\bigl[\bigl(A_{a|x}^{uv\lambda}\otimes{}B_{b|y}^{uv\lambda}\bigr)\rho_{\rm AB}^{uv\lambda}\bigr]$~\cite{crucial}. Note that, in principle, the side-information $\Lambda$ could have an impact in the guessing probability, as we do not request each $\mathcal{P}_{uv}$ to be committed to a single guess, but we allow it to have contributions $\mathcal{P}_{uv\lambda}$ both in $\mathcal{G}_{0}$ and in $\mathcal{G}_{1}$. Putting it all together, for a given model of $(U,V)$, captured by the conditional statistics $q_{uv|xy}=q_{u|x}q_{v|y}$, Eve's guessing probability subject to $S[\mathcal{P}]=S_{\rm obs}$ is computed via
\begin{eqnarray}\label{program}
&&\max_{\displaystyle{\bigl\{q_{\lambda|uv},\mathcal{P}_{uv\lambda}\bigr\}}}\quad \sum_{u,v}q_{uv|x^{*}}\sum_{a}\sum_{\mathcal{P}_{uv\lambda}\in\mathcal{G}_{a}}q_{\lambda|uv}\mathcal{P}_{uv\lambda}(a|x^{*})\nonumber \\
&&\hspace{.2cm}\textrm{s.t.}\hspace{.2cm}\sum_{a,b,x,y}(-1)^{a+b+xy}\sum_{u,v}q_{uv|xy}\mathcal{P}_{uv}(a,b|x,y)=S_{\rm obs},\nonumber \\
&&\hspace{.9cm}\mathcal{P}_{uv\lambda}\in{}\mathcal{Q}\hspace{.1cm}\textrm{for all } u,v,\lambda,
\end{eqnarray}
where $\mathcal{P}_{uv}(a,b|x,y)=\sum_{\lambda}q_{\lambda|uv}\mathcal{P}_{uv\lambda}(a,b|x,y)$. To get rid of the coefficients $q_{\lambda|uv}$, we introduce the subnormalized behaviors
\begin{equation}\label{subnormalized}
\tilde{\mathcal{P}}_{uv}^{(a)}=\sum_{\mathcal{P}_{uv\lambda}\in\mathcal{G}_{a}}q_{\lambda|uv}\mathcal{P}_{uv\lambda}
\end{equation}
for all $u$, $v$ and $a$, where the subnormalized quantum set is defined as $\tilde{\mathcal{Q}}=\{\hspace{.01cm}w{}\mathcal{P}\hspace{.01cm}:\hspace{.01cm} 0\leq{}w\leq{}1,\mathcal{P}\in\mathcal{Q}\hspace{.02cm}\}$. Obviously, it follows from Eq.~(\ref{subnormalized}) that $\mathcal{P}_{uv}=\tilde{\mathcal{P}}_{uv}^{(0)}+\tilde{\mathcal{P}}_{uv}^{(1)}$ for all $u$ and $v$, and one can cast the problem as
\begin{equation}\begin{aligned}\label{reduction_1} &\max_{\displaystyle{\bigl\{\tilde{\mathcal{P}}_{uv}^{(a)}\bigr\}}}\quad \sum_{u,v}q_{uv|x^{*}}\left[\tilde{\mathcal{P}}_{uv}^{(0)}(0|x^{*})+\tilde{\mathcal{P}}_{uv}^{(1)}(1|x^{*})\right]\hspace{.4cm}\\ &\hspace{.2cm}\textrm{s.t.}\hspace{.2cm}\sum_{a,b,x,y}(-1)^{a+b+xy}\sum_{u,v}q_{uv|xy}\left[\tilde{\mathcal{P}}^{(0)}_{uv}(a,b|x,y)+\right.\\
&\hspace{.8cm}\left.\tilde{\mathcal{P}}^{(1)}_{uv}(a,b|x,y)\right]=S_{\rm obs},\\
&\hspace{.9cm}\tilde{\mathcal{P}}_{uv}^{(a)}\in{}\tilde{\mathcal{Q}}\hspace{.1cm}\textrm{for all } u,v,a.
\end{aligned}\end{equation}
This is an optimization task over 8 behaviors, but a standard trick allows to reduce this number to 4~\cite{Scarani_book}. Indeed, let us denote the bit flips of $a$ and $b$ by $a'$ and $b'$. Defining $\tilde{\mathcal{P}}_{uv}^{'(1)}(a,b|x,y)=\tilde{\mathcal{P}}_{uv}^{(1)}(a',b'|x,y)$, the behavior $\tilde{\mathcal{P}}_{uv}^{'(1)}$ fulfills: (i) $\tilde{\mathcal{P}}_{uv}^{(1)}\in{}\tilde{\mathcal{Q}}$ if and only if $\tilde{\mathcal{P}}_{uv}^{'(1)}\in{}\tilde{\mathcal{Q}}$, (ii) $\sum_{ab}(-1)^{a+b}\tilde{\mathcal{P}}_{uv}^{(1)}(a,b|x,y)=\sum_{ab}(-1)^{a+b}\tilde{\mathcal{P}}_{uv}^{'(1)}(a,b|x,y)$, and (iii) $\tilde{\mathcal{P}}_{uv}^{(1)}(1|x)=\tilde{\mathcal{P}}_{uv}^{'(1)}(0|x)$ for all $x$. As a consequence, Eq.~(\ref{reduction_1}) can be directly written in terms of the 4 normalized behaviors $\mathcal{P}_{uv}^{'}=\tilde{\mathcal{P}}_{uv}^{(0)}+\tilde{\mathcal{P}}_{uv}^{'(1)}$, such that Eve's guessing probability $G_{x^{*}}(S_{\rm obs})$ is computed via
\begin{eqnarray}\label{behavior_reduced}
&&\max_{\displaystyle{\bigl\{\mathcal{P}_{uv}^{'}\bigr\}}}\quad \sum_{u,v}q_{uv|x^{*}}\mathcal{P}_{uv}^{'}(0|x^{*})\hspace{.4cm}\nonumber \\
&&\hspace{.2cm}\textrm{s.t.}\hspace{.2cm}\sum_{a,b,x,y}(-1)^{a+b+xy}\sum_{u,v}q_{uv|xy}\mathcal{P}_{uv}^{'}(a,b|x,y)=S_{\rm obs},\nonumber \\
&&\hspace{.9cm}\mathcal{P}_{uv}^{'}\in{}\mathcal{Q}\hspace{.2cm}\textrm{for all } u,v.
\end{eqnarray}
One can tackle this problem by relaxing the membership conditions $\mathcal{P}_{uv}^{'}\in{}\mathcal{Q}$ into $\mathcal{P}_{uv}^{'}\in{}\mathcal{Q}^{k}$, where $\mathcal{Q}^{k}$ denotes the $k$-th level of the Navascués-Pironio-Acín (NPA) hierarchy~\cite{NPA1,NPA2}. In Appendix~\ref{intermediate}, we provide an explicit semidefinite program that computes $G_{0}(S_{\rm obs})$ using the intermediate level $\mathcal{Q}^{1+\mathrm{AB}}$. This level is tight in the absence of leakage, reproducing the analytical bound of Eq.~(\ref{pguess}) up to numerical precision. Notably, the program is plug-and-play for the leakage model, in the sense that it can be run for any model given its conditional statistics $q_{u|x}$ and $q_{v|y}$. As an example, in Fig.~\ref{fig:dirg} we plot $G_{0}(S_{\rm obs})$ for a binary symmetric leakage channel, setting a common crossover probability $P(U\neq{}X)=P(V\neq{}Y)=\varepsilon\leq{}1/2$ on each side.
\begin{figure}[H]
    \centering
    \includegraphics[width=\columnwidth]{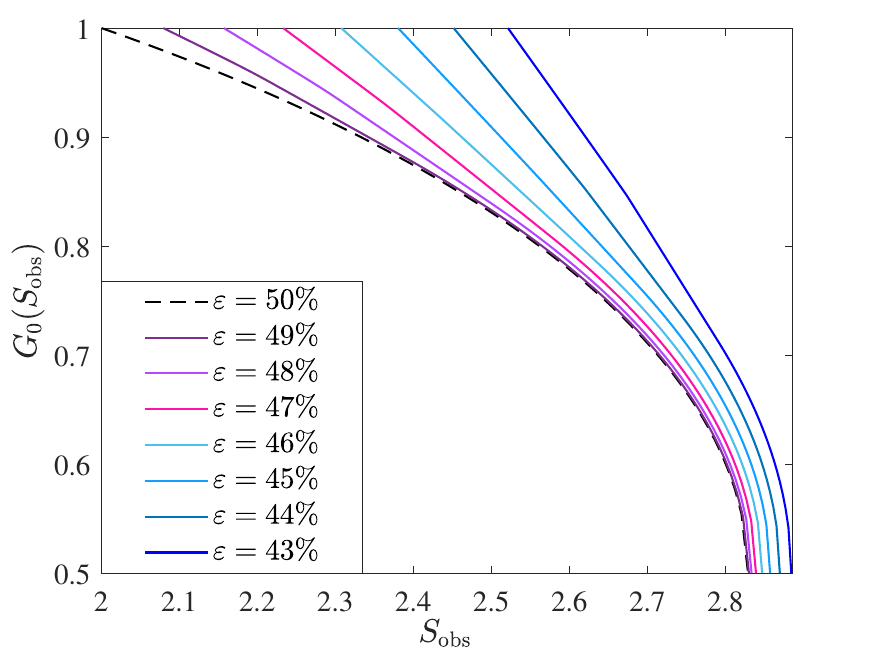}
    \caption{Local guessing probability of Eve in a CHSH test with input leakage, as a function of the observed CHSH value, $S_{\rm obs}$. For illustration purposes, the leakage is modeled as the output of a binary symmetric channel acting on the local input. Several crossover parameters $\varepsilon$ are considered, defined as the probability of a mismatch between the local input and its leaked value.}
    \label{fig:dirg}
\end{figure}

Interestingly for this model, the local and quantum bounds of the CHSH value are easily obtained by drawing a connection with a biased-inputs CHSH game~\cite{Lawson}, which we do in Appendix~\ref{bias}. It follows from this connection that, as long as $\varepsilon>1-1/\sqrt{2}\approx{}0.293$, the observation of the quantum bound self-tests a maximally entangled two-qubit state. Indeed, up to numerical precision, the limit cases are manifest in Fig.~\ref{fig:dirg}: $G_{0}(S_{\rm obs})=1$ only in the local regime, while $G_{0}(S_{\rm obs})=1/2$ if the quantum bound is saturated. For $\varepsilon\leq{}1-1/\sqrt{2}$, the quantum and local bounds coincide, such that no randomness can be certified based on the CHSH value with quantum behaviors.

As a final reflection, we remark that considering binary leakage is fully general against an adversary that coarse-grains the leaked information into a pair of guesses of $X$ and $Y$. In fact, it seems reasonable that the optimal exploitation of any leakage channel is of this kind.

On a different level, while exploring the connection between input leakage and input bias, we derive an analytical formula of Eve's local guessing probability in a CHSH game where Alice's input is biased towards Eve's target. This formula is given in Appendix~\ref{bias}.

\section{DIQKD WITH INPUT LEAKAGE}\label{DIQKD}
Minor modifications of the previous analysis allow to assess the performance of DIQKD with input leakage. In DIQKD, it is customary to add a third input to one of the parties for the purpose of key generation~\cite{Acín2007,PironioDIQKD}. In particular, here we take $x\in\{0,1\}$ and $y\in\{0,1,2\}$, such that any round with $(x,y)\in\{0,1\}^{2}$ is a \emph{test round}, and any round with $(x,y)=(0,2)$ is a \emph{key round}. As in DIRG, the process is characterized by the observed behavior $\mathcal{P}$ under the i.i.d. assumption, which in the context of QKD implies \emph{collective attacks}. Under these circumstances, an asymptotic lower bound on the secret key-rate follows given an upper bound on Eve's guessing probability of Alice's output in the key rounds. Denoting this bound by $G_{02}(S_{\rm obs})$, one can asymptotically extract~\cite{Masanes,Primaatmaja}
\begin{equation}
r_{\infty}=-\log_{2}G_{02}(S_{\rm obs})-h(Q)
\end{equation}
secret key bits per round, where $Q$ denotes the bit error probability of the key rounds. The optimization problem to obtain $G_{02}(S_{\rm obs})$ reads
\begin{eqnarray}\label{program_2}
&&\max_{\displaystyle{\bigl\{q_{\lambda|uv},\mathcal{P}_{uv\lambda}\bigr\}}}\quad \sum_{u,v}q_{uv|02}\sum_{a}\sum_{\mathcal{P}_{uv\lambda}\in\mathcal{G}_{a}}q_{\lambda|uv}\mathcal{P}_{uv\lambda}(a|0,2)\nonumber \\ &&\textrm{s.t.}\hspace{.1cm}\sum_{a,b}\sum_{x,y=0,1}(-1)^{a+b+xy}\sum_{u,v}q_{uv|xy}\mathcal{P}_{uv}(a,b|x,y)=S_{\rm obs},\nonumber \\
&&\hspace{.6cm}\mathcal{P}_{uv\lambda}\in{}\mathcal{Q}\hspace{.1cm}\textrm{for all } u,v,\lambda,
\end{eqnarray}
where $\mathcal{P}_{uv}(a,b|x,y)=\sum_{\lambda}q_{\lambda|uv}\mathcal{P}_{uv\lambda}(a,b|x,y)$. This problem is analogous to the one for $G_{0}(S_{\rm obs})$ in the previous section, except from the fact that the target refers to the key rounds, and the CHSH value is only computed with the test rounds. After applying the same \emph{behavior reduction} as in Sec.~\ref{DIRG}, the problem is simplified to
\begin{eqnarray}\label{behavior_reduced_2}
&&\max_{\displaystyle{\bigl\{\mathcal{P}_{uv}\bigr\}}}\quad \sum_{u,v}q_{uv|02}\mathcal{P}_{uv}(0|0,2)\hspace{.4cm}\nonumber \\
&&\textrm{s.t.}\hspace{.1cm}\sum_{a,b}\sum_{x,y=0,1}(-1)^{a+b+xy}\sum_{u,v}q_{uv|xy}\mathcal{P}_{uv}(a,b|x,y)=S_{\rm obs},\nonumber \\
&&\hspace{.6cm}\mathcal{P}_{uv}\in{}\mathcal{Q}\hspace{.2cm}\textrm{for all } u,v.
\end{eqnarray}
A crucial subtlety captured by Eq.~(\ref{behavior_reduced_2}) is that, in the presence of input leakage, Eve can partially distinguish key rounds from test rounds. As an intuitive example, she could use strongly entangled states if the leakage anticipates a test round (to boost the CHSH value), and weakly entangled states if the leakage anticipates a key round (to boost her guessing probability).

For illustration purposes, in Fig.~\ref{fig:diqkd} we provide key-rate plots using the intermediate level $\mathcal{Q}^{1+\mathrm{AB}}$ of the NPA hierarchy. We assume a binary symmetric channel for $U$ and a ternary symmetric channel for $V$, with respective crossover probabilities $P(U\neq{}X)=\varepsilon_{\rm a}\leq{}1/2$ and $P(V\neq{}Y)=\varepsilon_{\rm b}\leq{}2/3$~\cite{alphabets}. As standard in DIQKD, in the top panel we consider the statistics of a maximally entangled photon-pair subject to depolarizing noise, such that $Q=\left(1-S_{\rm obs}/2\sqrt{2}\right)/2$ for each $S_{\rm obs}\in[2,2\sqrt{2}]$~\cite{PironioDIQKD,choice}. Alternatively, in the bottom panel we assume that each photon in the photon-pair undergoes a lossy channel with transmittance $\eta=10^{-\alpha_{\rm att}L/10}$, where $L$ (km) denotes the distance between the source and the equidistant parties and $\alpha_{\rm att}$ (dB/km) denotes the attenuation coefficient of the channel. We set $\alpha_{\rm att}=0.2$ dB/km, which is a typical value for an optical fiber at 1550 nm. This model yields $S_{\rm obs}=2\sqrt{2}\eta^2+2(1-\eta)^2$ and $Q=\eta(1-\eta)$, under the standard assumption that the undetected events are locally mapped to a fixed detection outcome to close the detection loophole~\cite{Pearle,Gisin,PironioDIQKD}.
\begin{figure}[H]
    \centering
    \includegraphics[width=\columnwidth]{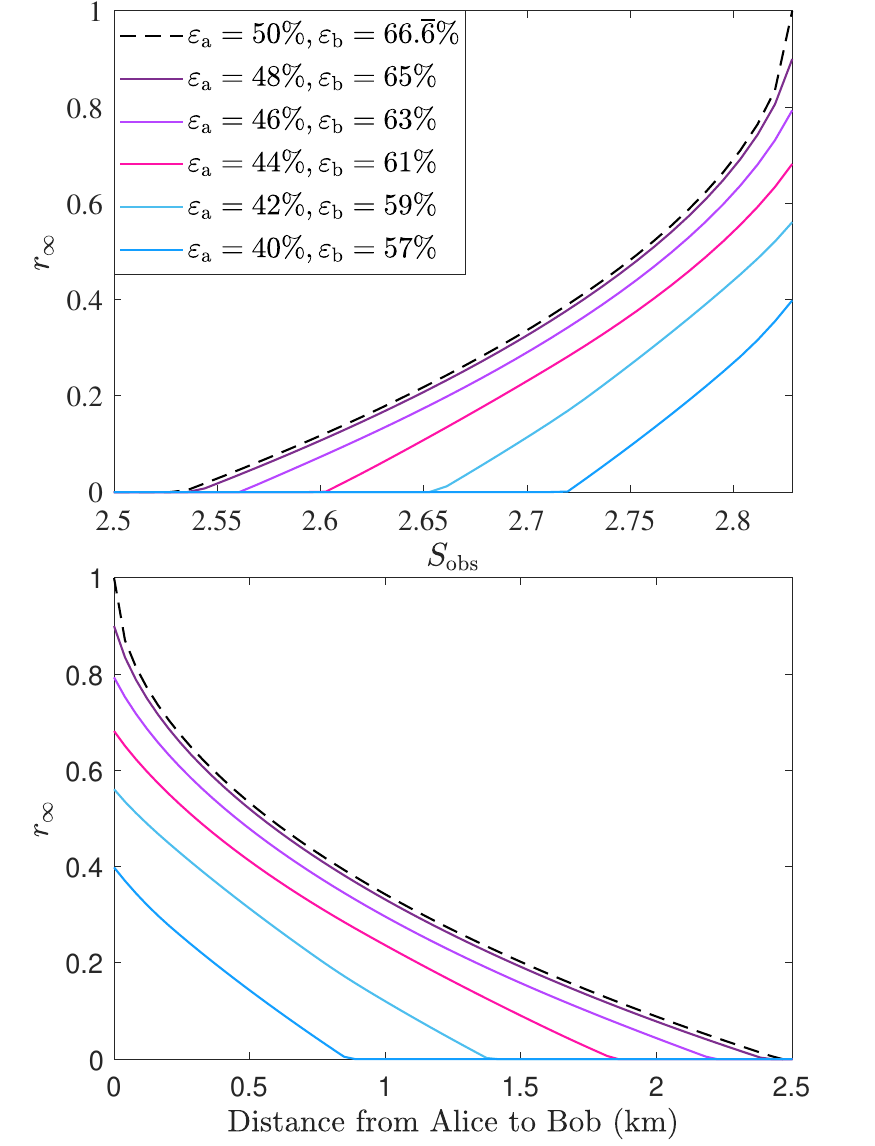}
    \caption{Lower bound on the asymptotic secret key-rate $r_{\infty}$ of a DIQKD protocol~\cite{PironioDIQKD} with input leakage. In each lab, the leakage is modeled with a $q$-ary symmetric channel, where $q$ denotes the number of local inputs (2 for Alice, 3 for Bob). The crossover parameters, $\varepsilon_{\rm a}$ and $\varepsilon_{\rm b}$, define the mismatch probabilities between the local inputs and their leaked values. Top panel: key rate as a function of the observed CHSH value, $S_{\rm obs}$, assuming a standard depolarizing noise model for the bit-error probability in the key-generation rounds. Bottom panel: key rate as a function of the distance between the parties, considering a central source that generates pairs of maximally entangled photons which traverse lossy (but noiseless) optical fibers of the same length. The reader is referred to the main text for further details.}
    \label{fig:diqkd}
\end{figure}

\section{Conclusions}\label{Outlook}
In this work, we have put forward simple analyses of device-independent randomness generation (DIRG) and device-independent quantum key distribution (DIQKD) with partial leakage of the protocol inputs, modeling the leaked information with characterized noisy channels. For a plug-and-play leakage model, we upper-bound Eve's local guessing probability or lower-bound the asymptotic key rate of a DIQKD protocol as a function of the magnitude of the input leakage.

One could argue that, even if this type of leakage surely takes place in Bell experiments, its active exploitation by the users' devices is the signature of a malicious design~\cite{Pironio2013,memory_attacks,malicious_1,malicious_2}. From a cryptographic perspective, however, this adversarial viewpoint seems inevitable to safely account for input leakage in DI cryptography.

Several extensions of this work are worth exploring. For practical purposes, it would be interesting to extend our analyses to the finite-size non-i.i.d.~regime via entropy accumulation~\cite{EAT1,EAT2,REAT}, possibly tackling the single-round optimization directly in terms of Von Neumann or Rényi entropies~\cite{Brown2024,Dupuis2023,Hahn2025}. Similarly, another natural direction would be to relax the characterization of the leakage channels, e.g. by only imposing boundary constraints on the transition probabilities $q_{u|x}$ and $q_{v|y}$. In addition, our results extend naturally to certify global (rather than local) randomness, which is the relevant resource for DI randomness amplification. Lastly, the proposed analyses can be extended to multipartite scenarios and different Bell inequalities.

\section{Acknowledgments} This work was supported by the Galician Regional Government (consolidation of research units: atlanTTic), the Spanish Ministry of Economy and Competitiveness (MINECO), the Fondo Europeo de Desarrollo Regional (FEDER) through the grant No. PID2024-162270OB-I00, the ``Hub Nacional de Excelencia en Comunicaciones Cuánticas” funded by the Spanish Ministry for Digital Transformation and the Public Service and the European Union NextGenerationEU, the European Union’s Horizon Europe Framework Programme under the Marie Sklodowska-Curie Grant No. 101072637 (Project QSI) and the project ``Quantum Secure Networks Partnership” (QSNP, grant agreement No 101114043), the European Union via the European Health and Digital Executive Agency (HADEA) under the Project QuTechSpace (grant 101135225), the European Union under the Project IberianQCI (grant 101249593), and the ``Programa de Cooperación Interreg VI-A España–Portugal" (POCTEP) 2021–2027 through the project QUANTUM IBER\_IA.


\appendix

\section{Hidden-variable models}\label{HVM}
Here, we list the three standard assumptions that define a local hidden-variable model. To stick to the literature on the topic, we avoid the behavior notation introduced in the main text.

Let $\Lambda$ be an arbitrary random variable possibly affecting the conditional statistics $p(a,b|x,y)$ of a CHSH test. We assume $\Lambda$ to be discrete for ease of notation, since contemplating continuous $\Lambda$'s does not affect any of our considerations. From the law of total probability, we have
\begin{equation}\label{sep}
p(a,b|x,y)=\sum_{\lambda}p(\lambda|x,y)p(a,b|x,y,\lambda),
\end{equation}
and various classes of hidden-variable models are defined by imposing constraints on $p(a,b|x,y,\lambda)$ and $p(\lambda|x,y)$. To be precise, $\Lambda$ is a \emph{measurement-independent} model if
\begin{equation}\label{mi}
p(\lambda|x,y)=p(\lambda),
\end{equation}
an \emph{outcome-independent} model if
\begin{equation}\label{OI}
p(a,b|x,y,\lambda)=p(a|x,y,\lambda)p(b|x,y,\lambda),
\end{equation}
and a \emph{parameter-independent} model if
\begin{equation}\label{PI}
p(a|x,y,\lambda)=p(a|x,\lambda)\hspace{.2cm}\mathrm{and}\hspace{.2cm}p(b|x,y,\lambda)=p(b|y,\lambda).
\end{equation}
Note that the parameter-independence of a hidden-variable model does not presume its outcome-independence in any way. To emphasize this, it is standard to write down Eq.~(\ref{PI}) as $\sum_{b}p(a,b|x,y,\lambda)=p(a|x,\lambda)$ and $\sum_{a}p(a,b|x,y,\lambda)=p(b|y,\lambda)$. Putting it all together, $\Lambda$ is a \emph{local} model of the statistics if it is measurement-independent, outcome-independent and parameter-independent, such that
\begin{equation}
p(a,b|x,y)=\sum_{\lambda}p(\lambda)p(a|x,\lambda)p(b|y,\lambda).
\end{equation}

\section{Semidefinite programs in the 1+AB level of the NPA hierarchy}\label{intermediate}
In this Appendix, we provide an explicit semidefinite program that computes $G_{0}(S_{\rm obs})$ in the presence of input leakage, resorting to the 1+AB level of the NPA hierarchy~\cite{NPA1,NPA2}.

Let $\mathcal{P}\in\mathcal{Q}$. From Proposition 4 in~\cite{NPA2}, $\mathcal{P}$ admits a \emph{pure} and \emph{projective} realization $\bigl(\ket{\psi},\{A_{a|x}\},\{B_{b|y}\}\bigr)$ in the commuting-operator framework~\cite{NPA1,NPA2}, with the property that the moment matrix $\Gamma$ associated to the list
\begin{align}
&\mathcal{F}_{1+\mathrm{AB}} = \{\mathds{1}, A_{0|0}, A_{0|1}, B_{0|0}, B_{0|1}, A_{0|0}B_{0|0}, A_{0|0}B_{0|1}, \nonumber \\
&A_{0|1}B_{0|0}, A_{0|1}B_{0|1}\},
\end{align}
is real, symmetric and positive semidefinite. The existence of such a $\Gamma$, together with the finite set of constraints imposed on it by (i) the idempotence of the local projectors and (ii) their commutation relations, provides a definition of the membership of $\mathcal{P}$ in $\mathcal{Q}^{1+\mathrm{AB}}$ in projector form~\cite{other_constraints}. 

Therefore, in the standard moment matrix notation $\Gamma(f,g)=\bra{\psi}f^{\dagger}g\ket{\psi}$~\cite{Tavakoli}, the relaxation of Eq.~(\ref{behavior_reduced}) at the 1+AB level reads
\begin{equation}\begin{split}\label{behavior_reduced_relaxed}
&\max\quad \sum_{u,v}q_{uv|x^{*}}\Gamma_{uv}\bigl(A^{uv}_{0|x^{*}},\mathds{1}\bigr)\hspace{.4cm}\\
&\hspace{.2cm}\textrm{s.t.}\hspace{.2cm}\sum_{a,b,x,y}(-1)^{a+b+xy}\sum_{u,v}q_{uv|xy}\Gamma_{uv}\bigl(A^{uv}_{a|x},B^{uv}_{b|y}\bigr)=S_{\rm obs},\\
&\hspace{.9cm}\Gamma_{uv}\succeq{}0\hspace{.2cm}\textrm{for all } u,v,
\end{split}\end{equation}
where we assume that the idempotence and commutation constraints are already incorporated in the $\Gamma_{uv}$ matrices. For completeness, this explicit encoding of the constraints in the $\Gamma_{uv}$ is worked out in the following proposition, which is somewhat inspired by Criterion 5 in~\cite{NPA2}.
\begin{proposition}[Membership in $\mathcal{Q}^{1+\mathrm{AB}}$]
$\mathcal{P}\in{}\mathcal{Q}^{1+\mathrm{AB}}$ if and only if there exists a real symmetric positive semidefinite matrix $\Gamma\succeq{}0$ of the form
\begin{equation}\label{certificate}
\Gamma= \\
\begin{pmatrix}
1      & a_0   & a_1   & b_0   & b_1   & c_{00} & c_{01} & c_{10} & c_{11} \\
a_0    & a_0   & \alpha& c_{00}& c_{01}& c_{00} & c_{01} & \mu_0  & \mu_1  \\
a_1    & \alpha& a_1   & c_{10}& c_{11}& \mu_0  & \mu_1  & c_{10} & c_{11} \\
b_0    & c_{00}& c_{10}& b_0   & \beta & c_{00} & \nu_0  & c_{10} & \nu_1  \\
b_1    & c_{01}& c_{11}& \beta & b_1   & \nu_0  & c_{01} & \nu_1  & c_{11} \\
c_{00} & c_{00}& \mu_0 & c_{00}& \nu_0 & c_{00} & \nu_0  & \mu_0  & \omega_0\\
c_{01} & c_{01}& \mu_1 & \nu_0 & c_{01}& \nu_0  & c_{01} & \omega_1& \mu_1 \\
c_{10} & \mu_0 & c_{10}& c_{10}& \nu_1 & \mu_0  & \omega_1& c_{10} & \nu_1 \\
c_{11} & \mu_1 & c_{11}& \nu_1 & c_{11}& \omega_0& \mu_1  & \nu_1  & c_{11}
\end{pmatrix},
\end{equation}
where $a_{x}=\mathcal{P}(0|x)$, $b_{y}=\mathcal{P}(0|y)$, $c_{xy}=\mathcal{P}(00|xy)$, and $\alpha$, $\beta$, $\mu_{0}$, $\mu_{1}$, $\nu_{0}$, $\nu_{1}$, $\omega_{0}$ and $\omega_{1}$ are unspecified entries.
\end{proposition}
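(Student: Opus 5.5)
The plan is to prove both directions by reading $\mathcal{Q}^{1+\mathrm{AB}}$ through its definition in~\cite{NPA2}: $\mathcal{P}\in\mathcal{Q}^{1+\mathrm{AB}}$ iff there is a positive semidefinite moment matrix indexed by $\mathcal{F}_{1+\mathrm{AB}}$ satisfying the linear constraints induced by idempotence, orthogonality and completeness of the local projectors and by the commutation $[A_{a|x},B_{b|y}]=0$, and whose entries reproduce $\mathcal{P}$. The proof then amounts to showing that, once these constraints are imposed entry by entry, the generic $9\times 9$ moment matrix collapses to the pattern of Eq.~(\ref{certificate}). I would first note, as the paper does via Proposition 4 of~\cite{NPA2}, that one may take the moment matrix real and symmetric: if $\Gamma$ is a valid certificate, so is $\bar\Gamma$, and hence so is $(\Gamma+\bar\Gamma)/2$. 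The reason is that the constraints are real linear equalities and the data $\mathcal{P}$ are real.

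For the direction ``$\Rightarrow$'' I would take the (symmetrized) certificate with entries $\Gamma(f,g)=\langle f^\dagger g\rangle$ and compute every entry. The first row and column give $\langle A_{0|x}\rangle=a_x$, $\langle B_{0|y}\rangle=b_y$ and $\langle A_{0|x}B_{0|y}\rangle=c_{xy}$. The diagonal uses idempotence: for example $\langle A_{0|x}B_{0|y}A_{0|x}B_{0|y}\rangle=\langle A_{0|x}B_{0|y}\rangle=c_{xy}$ by commutation plus $P^2=P$. The off-diagonal entries I would group by the reduced operator word they equal:
\begin{itemize}
\item $\langle A_{0|x}\,A_{0|x}B_{0|y}\rangle=c_{xy}$ and $\langle B_{0|y}\,A_{0|x}B_{0|y}\rangle=c_{xy}$;
\item $\langle A_{0|0}A_{0|1}\rangle=\alpha$ and $\langle B_{0|0}B_{0|1}\rangle=\beta$;
\item $\langle A_{0|0}A_{0|1}B_{0|y}\rangle=\mu_y$ (this arises, e.g., as $\Gamma(A_{0|0},A_{0|1}B_{0|y})$ and $\Gamma(A_{0|0}B_{0|y},A_{0|1}B_{0|y})$);
\item $\langle A_{0|x}B_{0|0}B_{0|1}\rangle=\nu_x$;
\item the two genuinely fourth-order words $\langle A_{0|0}A_{0|1}B_{0|0}B_{0|1}\rangle=\omega_0$ and $\langle A_{0|0}A_{0|1}B_{0|1}B_{0|0}\rangle=\omega_1$.
\end{itemize}
Reality of the entries lets me identify $\langle A_{0|1}A_{0|0}\rangle$ with $\alpha$, and similarly for the other reversed words, so symmetry is consistent. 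I would then check that each of the 45 upper-triangular positions lands on exactly the symbol shown in Eq.~(\ref{certificate}).

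For ``$\Leftarrow$'', any real symmetric PSD matrix of that form is itself a valid 1+AB certificate. By construction it satisfies all the linear constraints required by the level, since those constraints are exactly the equalities between entries encoded by repeated symbols. Its entries also match $\mathcal{P}$. Entries of $\mathcal{P}$ with outcome $1$ are determined by normalization and no-signalling from $a_x,b_y,c_{xy}$, so projectors of outcome $0$ suffice.

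The main obstacle is the bookkeeping in the second direction: I have to ensure that the pattern in Eq.~(\ref{certificate}) identifies \emph{all} entries forced equal by the NPA constraints and \emph{no} entries that are not. In particular the two orderings behind $\omega_0,\omega_1$ must remain independent, and the realness assumption must legitimately merge each word with its reverse. I would settle this by reducing each word $f^\dagger g$ to canonical form (Alice's operators first, adjacent duplicates removed) and comparing canonical forms position by position.
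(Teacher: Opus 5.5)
Your proposal is correct and follows essentially the same route as the paper. Both arguments reduce every entry $\Gamma(f,g)$ to one of the 16 moments $a_x,b_y,c_{xy},\alpha,\beta,\mu_y,\nu_x,\omega_0,\omega_1$, using idempotence, Alice--Bob commutation, and realness (which merges each word with its adjoint), and then check that no further identifications are forced. The only difference is in that last check: you compare canonical forms, whereas the paper groups the unobservable moments by minimal word length and separates $\omega_0$ from $\omega_1$ via $\omega_0-\omega_1=\langle A^{+}_{0}A^{+}_{1}[B^{+}_{0},B^{+}_{1}]\rangle$, with the commutator unconstrained.
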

\begin{proof}
It suffices to show that the ordered list of monomials in $\mathcal{F}_{1+\mathrm{AB}}$, together with the idempotence and commutation relations, exactly determine the structure of Eq.~(\ref{certificate}) on the moment matrix. For this, we introduce the notation $A^{+}_{x}=A_{0|x}$ and $B^{+}_{y}=B_{0|y}$, and define two lists of entries. First, we have 8 observable moments: $a_{x}=\Gamma(\mathds{1},A^{+}_{x})$, $b_{y}=\Gamma(\mathds{1},B^{+}_{y})$ and $c_{xy}=\Gamma(\mathds{1},A^{+}_{x}B^{+}_{y})$ for $x,y=0,1$. These provide a standard choice of 8 independent probabilities fully specifying $\mathcal{P}$. Secondly, we have 8 independent unobservable moments: $\alpha=\Gamma(A^{+}_{0},A^{+}_{1})$, $\beta=\Gamma(B^{+}_{0},B^{+}_{1})$, $\mu_{y}=\Gamma(A^{+}_{0},A^{+}_{1}B^{+}_{y})$, $\nu_{x}=\Gamma(B^{+}_{0},A^{+}_{x}B^{+}_{1})$, $\omega_{0}=\Gamma(A^{+}_{0}B^{+}_{0},A^{+}_{1}B^{+}_{1})$ and $\omega_{1}=\Gamma(A^{+}_{0}B^{+}_{1},A^{+}_{1}B^{+}_{0})$ for $x,y=0,1$. The independence of these 8 can be assessed as follows. First, the groups $\{\alpha,\beta\}$, $\{\mu_{0},\mu_{1},\nu_{0},\nu_{1}\}$ and $\{\omega_{0},\omega_{1}\}$ involve different minimum lengths (2, 3 and 4 operators, respectively), so it suffices to establish independence within each group~\cite{clarification2}. This is straightforward for the first two groups, and for the last group it follows from the fact that $\omega_{0}-\omega_{1}=\bra{\psi}A^{+}_{0}A^{+}_{1}[B^{+}_{0},B^{+}_{1}]\ket{\psi}$, and $[B^{+}_{0},B^{+}_{1}]$ is unconstrained. Given the independence of the 16 selected entries, the proof concludes by showing that they fix all other 29 diagonal or superdiagonal entries according to Eq.~(\ref{certificate}). The counting goes as follows. $\Gamma(\mathds{1},\mathds{1})=1$ (1 entry). $\Gamma(f,f)=\Gamma(\mathds{1},f)$ for all $f$, such that the diagonal of $\Gamma$ matches the list of observable moments (8 entries). $\Gamma(A^{+}_{x},B^{+}_{y})=\Gamma(A^{+}_{x},A^{+}_{x}B^{+}_{y})=\Gamma(B^{+}_{y},A^{+}_{x}B^{+}_{y})=c_{xy}$ for all $x,y$ (12 entries). $\Gamma(A^{+}_{1},A^{+}_{0}B^{+}_{y})=\Gamma(A^{+}_{0}B^{+}_{y},A^{+}_{1}B^{+}_{y})=\mu_{y}$ for all $y$ (4 entries). $\Gamma(B^{+}_{1},A^{+}_{x}B^{+}_{0})=\Gamma(A^{+}_{x}B^{+}_{0},A^{+}_{x}B^{+}_{1})=\nu_{x}$ for all $x$ (4 entries). This reproduces Eq.~(\ref{certificate}).
\end{proof}

\section{Connection with biased CHSH games}\label{bias}
Here, we make explicit the connection between input leakage and input bias in the CHSH scenario, and provide a complementary result for the latter case.

Let us consider that, in every round of the test, Alice and Bob draw their inputs from an arbitrary distribution $\pi_{xy}$, where we stick to the convention $a,b,x,y\in\{0,1\}$. For every choice of $\pi_{xy}$, a biased CHSH game can be defined by the single-round score~\cite{Brunner_review,Scarani_book}
\begin{equation}\label{rule1}
V_{abxy}=4(-1)^{a+b+xy},
\end{equation}
which encodes the fact that, in every round, the winning condition is $a+b=xy$ (mod 2), and a win/loss translates into a score of $\pm{}4$. Given an arbitrary behavior $\mathcal{P}$, its average score in this game reads
\begin{equation}\label{score}
V_{\pi}[\mathcal{P}]=4\sum_{a,b,x,y}\pi_{xy}\mathcal{P}(a,b|x,y)(-1)^{a+b+xy},
\end{equation}
which for $\pi_{xy}=1/4$ yields the definition of $S[\mathcal{P}]$ given in Eq.~(\ref{def}). That is to say,
\begin{equation}\label{equivalence}
S[\mathcal{P}]=V_{U}[\mathcal{P}],
\end{equation}
where $V_{U}$ is the linear functional that maps any behavior to its average score in an unbiased CHSH game.

In what follows, we upper bound $V_{U}[\mathcal{P}]$ (and thus $S[\mathcal{P}]$) in the presence of the type of input leakage $(U,V)$ described in the main text, characterized by the transition probabilities $q_{u|x}$ and $q_{v|y}$. Expanded in terms of the underlying behaviors $\mathcal{P}_{uv}$, $V_{U}[\mathcal{P}]$ reads
\begin{equation}\label{score2}
V_{U}[\mathcal{P}]=4\sum_{a,b,x,y,u,v}\pi_{xyuv}\mathcal{P}_{uv}(a,b|x,y)(-1)^{a+b+xy}
\end{equation}
for $\pi_{xyuv}=q_{uv}\pi_{xy|uv}$, where $q_{uv}$ and $\pi_{xy|uv}$ are fixed via Bayes' theorem given the transition probabilities and the uniform-inputs rule. Grouping conveniently, we have that
\begin{equation}\label{score3}
V_{U}[\mathcal{P}]=\sum_{u,v}q_{uv}\left[\hspace{.1cm}4\sum_{a,b,x,y}\pi_{xy|uv}\mathcal{P}_{uv}(a,b|x,y)(-1)^{a+b+xy}\right],
\end{equation}
where the term between brackets is the average score of $\mathcal{P}_{uv}$ in a biased CHSH game with input distribution $\pi_{xy|uv}=\pi_{x|u}\pi_{y|v}$. Unsurprisingly, every leaked pair $(u,v)$ determines a CHSH game with different input biases, quantified by the independent distributions $\pi_{x|u}$ and $\pi_{y|v}$. For any such game, the maximum score attainable with $\mathcal{P}_{uv}\in{}\mathcal{Q}$ is known to be~\cite{Lawson}
\begin{equation}\label{score_Q}
\begin{split}
&V^{uv}_{\mathcal{Q}}= \\
&\left\{
\begin{array}{ll}
4\sqrt{2}\sqrt{r_{u}^2+(1-r_{u})^2}\sqrt{s_{v}^2+(1-s_{v})^2} & \mathrm{if}\hspace{.2cm}r_{u}s_{v}\leq{}1/2 \\
4-8(1-r_{u})(1-s_{v}) & \rm{otherwise,} \\
\end{array} 
\right.
\end{split}
\end{equation}
where $r_{u}=\max_{x}\{\pi_{x|u}\}$ and $s_{v}=\max_{y}\{\pi_{y|v}\}$. In fact, for $r_{u}s_{v}\geq{}1/2$, $V^{uv}_{\mathcal{Q}}$ coincides with the maximum local score. Plugging Eq.~(\ref{score_Q}) into Eq.~(\ref{score3}), we conclude that $V_{U}[\mathcal{P}]\leq{}\sum_{uv}q_{uv}V^{uv}_{\mathcal{Q}}$. This result holds for any choice of $q_{u|x}$ and $q_{v|y}$, with arbitrary alphabets for $U$ and $V$.

Particularizing to the case of binary symmetric leakage with crossover probabilities $\varepsilon_{\rm a}$ and $\varepsilon_{\rm b}$, all pairs $(u,v)$ happen to reach the same $V^{uv}_{\mathcal{Q}}$, such that
\begin{equation}
S[\mathcal{P}]\leq{}4\sqrt{2}\sqrt{\vphantom{(1-\varepsilon_{\rm b})^2}(1-\varepsilon_{\rm a})^2+\varepsilon_{\rm a}^2}\sqrt{(1-\varepsilon_{\rm b})^2+\varepsilon_{\rm b}^2}
\end{equation}
if $(1-\varepsilon_{\rm a})(1-\varepsilon_{\rm b})\leq{}1/2$, and $S[\mathcal{P}]\leq{}4-8\varepsilon_{\rm a}\varepsilon_{\rm b}$ otherwise.

As a complementary result, in the exploration of the link between input leakage and input bias, we derive an analytical formula for Eve's guessing probability in a biased CHSH game, in the restricted case where (i) Alice's input is biased towards Eve's target $x^{*}$, and (ii) Bob's input is uniformly random. That is to say, $\pi_{xy}=\pi_{x}\pi_{y}$ for $\pi_{x}=r\delta_{x,0}+(1-r)\delta_{x,1}$ and $\pi_{y}=1/2$ for all $y$. In the nontrivial quantum regime where $S_{\rm obs}\in\bigl[4r,4\sqrt{r^{2}+(1-r)^2}\bigr]$, the precise bound reads
\begin{equation}\label{one-sided}
G_{x^{*}}(S_{\rm obs})=\frac{1}{2}\left\{  1 + \sqrt{ 1-\frac{1}{(1-r)^2}\left[\left(\frac{S_{\rm obs}}{4}\right)^2-r^2\right] }  \right\},
\end{equation}
which reduces to Eq.~(\ref{pguess}) for $r=1/2$. A simple way to obtain Eq.~(\ref{one-sided}) is to notice that, for the input distribution $\pi_{xy}$ under consideration,
\begin{equation}\label{connection}
V_{\pi}[\mathcal{P}]=\frac{1-r}{2}I^{0}_{\frac{r}{1-r}}[\mathcal{P}],
\end{equation}
where $I^{\beta}_{\alpha}$ is the Bell inequality considered in~\cite{Acín2012}. With this in mind, the desired analytical formula can be obtained from the state-dependent quantum bound of Eq.~(\ref{connection}) for an arbitrary pure entangled state of two qubits. This derivation is rather standard and formally identical to that of Eq.~(\ref{pguess}), so we omit it here for brevity.

\end{document}